\def\IEEE{1} 

\if 1\IEEE
\documentclass[letterpaper, 10 pt, conference]{ieeeconf}  
\else
\documentclass[10pt]{article}
\fi

\usepackage[utf8]{inputenc}
\usepackage{amsfonts}
\usepackage[english]{babel}
\usepackage{amsmath}
\usepackage{amssymb}

\if 0\IEEE
\usepackage{amsthm}
\usepackage{verbatim}
\fi

\usepackage{tikz}
\usepackage{mathtools}
\DeclareMathAlphabet{\pazocal}{OMS}{zplm}{m}{n}
\usepackage [autostyle, english = american]{csquotes} 
\MakeOuterQuote{"}
\usepackage{xargs}

\if 0\IEEE
\usepackage[numbers]{natbib}
\usepackage{enumitem}
\usepackage[backref=page]{hyperref}
\hypersetup{					 
   colorlinks,
   linkcolor={red},
   citecolor={blue},
   urlcolor={blue!80!black}
}
\fi

\if 1\IEEE
\IEEEoverridecommandlockouts 
\overrideIEEEmargins

\usepackage{graphics} 
\usepackage{epsfig} 
\usepackage{mathptmx} 
\usepackage{times} 

\usepackage{amsthm}
\usepackage{cleveref}
\usepackage{resizegather}
\newtheorem{theorem}{Theorem}[section]
\newtheorem{corollary}[theorem]{Corollary}
\newtheorem{lemma}[theorem]{Lemma}
\newtheorem{proposition}[theorem]{Proposition}
\theoremstyle{definition}
\newtheorem{definition}{Definition}[section]
\theoremstyle{definition}
\newtheorem{example}{Example}[section]
\theoremstyle{remark}
\newtheorem{remark}{Remark}

\fi

\if 0\IEEE
\newtheorem{theore}{Theorem}[section]
\newenvironment{theorem}
{%
\pushQED{\qed}\begin{theore}}
{\popQED\end{theore}}  

\newtheorem{cor}[theore]{Corollary}

\newtheorem{lem}[theore]{Lemma}

\newtheorem{pro}[theore]{Proposition}
\newenvironment{proposition}
{%
\pushQED{\qed}\begin{pro}}
{\popQED\end{pro}}

\theoremstyle{definition}
\newtheorem{defi}{Definition}[section]
\newenvironment{definition}
{%
\pushQED{\qed}\begin{defi}}
{\popQED\end{defi}}

\newtheorem{exa}{Example}[section]

\theoremstyle{remark}
\newtheorem{rem}{Remark}
\newenvironment{remark}
{%
\pushQED{\qed}\begin{rem}}
{\popQED\end{rem}}
\fi

\usetikzlibrary{positioning,arrows,petri,calc,decorations.markings,arrows.meta}
\tikzset{
place/.style={
circle,
thick,
minimum size=6mm,
draw
},
transitionV/.style={
rectangle,
thick,
fill=black,
minimum height=6mm,
inner xsep=1pt
}
}

\if 0\IEEE
\renewcommand*{\backref}[1]{}
\renewcommand*{\backrefalt}[4]{%
    \ifcase #1 (Not cited.)%
    \or        (Cited on page~#2.)%
    \else      (Cited on pages~#2.)%
    \fi}
\fi

\title{
\if 1\IEEE
\LARGE \bf%
\fi
	Periodic Trajectories in P-Time Event Graphs\\and the Non-Positive Circuit Weight Problem} 
\author{Davide Zorzenon, Jan Komenda and J\"{o}rg Raisch%
\thanks{This work was supported in part by the Deutsche Forschungsgemeinschaft (DFG, German Research Foundation), Projektnummer RA 516/14-1; in part by the Grantov\'{a} agentura \v{C}esk\'{e} republiky (GACR, Czech Science Foundation) grant 19-06175J; in part by MSMT INTER-EXCELLENCE project LTAUSA19098; and in part by RVO 67985840.}
\thanks{D. Zorzenon and J. Raisch are with Technische Universit\"at Berlin,
        Control Systems Group, Einsteinufer 17, D-10587 Berlin, Germany (e-mail: zorzenon@control.tu-berlin.de; raisch@control.tu-berlin.de).}
\thanks{J. Komenda is with Institute of Mathematics,
        Czech Academy of Sciences, \v{Z}itn\'a 25, 115 67 Prague, Czech Republic (e-mail: komenda@ipm.cz).}
}

\begin{document}

\newcommand{\graph}{\pazocal{G}}
\newcommand*{\Resize}[2]{\resizebox{#1}{!}{$#2$}}%

\newcommand{\R}{\mathbb{R}}
\newcommand{\nat}{\mathbb{N}}
\newcommand{\nato}{\mathbb{N}_0}
\newcommand{\Rmax}{{\R}_{\normalfont\fontsize{7pt}{11pt}\selectfont\mbox{max}}}
\newcommand{\maxplus}{\oplus}
\newcommand{\maxtimes}{\otimes}
\newcommand{\minplus}{\wedge}
\newcommand{\mintimes}{\odot}
\newcommand{\bigmaxplus}{\bigoplus}
\newcommand{\bigmaxtimes}{\bigotimes}
\newcommand{\bigminplus}{\bigwedge}
\newcommand{\bigmintimes}{\bigodot}
\newcommand{\Rbar}{\bar{\R}_{\normalfont\fontsize{7pt}{11pt}\selectfont\mbox{max}}}
\newcommand{\emptystr}{\mathsf{e}}
\newcommand{\arcs}{E}
\newcommand{\nodes}{N}
\newcommand{\nonegset}{\Gamma}
\newcommand{\nonegsetm}{\Gamma_{M}}
\newcommand{\xy}{\pazocal{S}}
\newcommand{\xx}{\pazocal{P}}
\newcommand{\yy}{\pazocal{I}}
\newcommand{\ltm}{\mu}
\newcommand{\lang}{\pazocal{L}}
\newcommand{\Tpazo}{\pazocal{T}}
\newcommand{\MP}{P}
\newcommand{\MI}{I}
\newcommand{\MC}{C}
\newcommand{\calP}{\pazocal{\MP}}
\newcommand{\calI}{\pazocal{\MI}}
\newcommand{\calC}{\pazocal{\MC}}
\newcommand{\setP}{\mathcal{\MP}_\sigma}
\newcommand{\setI}{\mathcal{\MI}_\sigma}
\newcommand{\setC}{\mathcal{\MC}_\sigma}
\newcommand{\MA}{A}
\newcommand{\MB}{B}
\newcommand{\graphPIC}{\graph(\lambda \MP \oplus \lambda^{-1} \MI \oplus \MC)}
\newcommand{\graphcalPIC}{\graph(\lambda^d \calP \oplus \lambda^{-d} \calI \oplus \calC)}
\newcommand{\graphPICm}{\graph(\lambda \MP, \lambda^{-1} \MI, \MC)}
\newcommand{\graphNCPt}{\graph((\TP\totimes\MP)\oplus(\TI\totimes\MI)\oplus(\TC\totimes\MC))}
\newcommand{\solNCPt}{\Lambda_{\mbox{\normalfont\footnotesize NCP}}((\TP\totimes\MP)\oplus(\TI\totimes\MI)\oplus(\TC\totimes\MC))}
\newcommandx{\solNCP}[1]{\Lambda_{\mbox{\normalfont\footnotesize NCP}}(#1)}
\newcommandx{\solPIC}[1]{\Lambda_{\mbox{\normalfont\footnotesize PIC-NCP}}(#1)}
\newcommandx{\solPTEG}[1][1=d]{\Lambda_{\mbox{\normalfont\footnotesize P-TEG}_{#1}}(A^0,A^1,B^0,B^1)}
\newcommand{\wA}{\mathsf{\MakeLowercase{\MA}}}
\newcommand{\wB}{\mathsf{\MakeLowercase{\MB}}}
\newcommand{\wP}{\mathsf{\MakeLowercase{\MP}}}
\newcommand{\wI}{\mathsf{\MakeLowercase{\MI}}}
\newcommand{\wC}{\mathsf{\MakeLowercase{\MC}}}
\newcommand{\wZ}{\mathsf{z}}
\newcommand{\D}{\pazocal{D}}
\newcommand{\floor}[1]{\left\lfloor#1\right\rfloor}
\catcode`@=11
\def\plslash{\ifx\@currsize\normalsize
{\mathchoice
{\,\mbox{\raisebox{0.2ex}{$\scriptstyle\circ$}\kern-1ex$\setminus$}}
{\,\mbox{\raisebox{0.2ex}{$\scriptstyle\circ$}\kern-1ex$\setminus$}}%
{\,\mbox{\raisebox{0.14ex}{$\scriptscriptstyle\circ$}\kern-0.8ex%
${\scriptstyle\setminus}$}}%
{\,\mbox{\raisebox{0.14ex}{$\scriptscriptstyle\circ$}\kern-0.8ex%
${\scriptstyle\setminus}$}}}%
\else\ifx\@currsize\large\,\mbox{\raisebox{0.2ex}{$\scriptstyle\circ$}\kern-1ex$\setminus$}
\else\ifx\@currsize\small\,\mbox{\raisebox{0.2ex}{$\scriptstyle\circ$}\kern-1ex$\setminus$}
\else\,\mbox{\raisebox{0.2ex}{$\scriptstyle\circ$}\kern-0.1ex$\setminus$}
\fi\fi\fi}
\newcommand{\totimes}{\otimes^t}
\newcommandx{\TP}[1][1=d]{T_{#1}^{\MP}}
\newcommandx{\TI}[1][1=d]{T_{#1}^{\MI}}
\newcommandx{\TC}[1][1=d]{T_{#1}^{\MC}}
\newcommandx{\T}[2][2=d]{T_{#2}^{X_{#1}}}

\newcommand{\oldiv}{\protect\plslash}
\def\prslash{\ifx\@currsize\normalsize
{\mathchoice
{\mbox{\raisebox{0.2ex}{$\scriptstyle\circ$}\kern-1ex$/$}}
{\mbox{\raisebox{0.2ex}{$\scriptstyle\circ$}\kern-1ex$/$}}%
{\mbox{\raisebox{0.14ex}{$\scriptscriptstyle\circ$}\kern-0.8ex%
${\scriptstyle/}$}}%
{\mbox{\raisebox{0.14ex}{$\scriptscriptstyle\circ$}\kern-0.8ex%
${\scriptstyle/}$}}}%
\else\ifx\@currsize\large\mbox{\raisebox{0.2ex}{$\scriptstyle\circ$}\kern-1ex$/$}
\else\ifx\@currsize\small\mbox{\raisebox{0.2ex}{$\scriptstyle\circ$}\kern-1ex$/$}
\else\mbox{\raisebox{0.2ex}{$\scriptstyle\circ$}\kern-1ex$/$} \fi\fi\fi}

\def\plslashblack{\ifx\@currsize\normalsize
{\mathchoice
{\,\mbox{\raisebox{0.2ex}{$\scriptstyle\bullet$}\kern-1ex$\setminus$}}
{\,\mbox{\raisebox{0.2ex}{$\scriptstyle\bullet$}\kern-1ex$\setminus$}}%
{\,\mbox{\raisebox{0.14ex}{$\scriptscriptstyle\bullet$}\kern-0.8ex%
${\scriptstyle\setminus}$}}%
{\,\mbox{\raisebox{0.14ex}{$\scriptscriptstyle\bullet$}\kern-0.8ex%
${\scriptstyle\setminus}$}}}%
\else\ifx\@currsize\large\,\mbox{\raisebox{0.2ex}{$\scriptstyle\bullet$}\kern-1ex$\setminus$}
\else\ifx\@currsize\small\,\mbox{\raisebox{0.2ex}{$\scriptstyle\bullet$}\kern-1ex$\setminus$}
\else\,\mbox{\raisebox{0.2ex}{$\scriptstyle\bullet$}\kern-0.1ex$\setminus$}
\fi\fi\fi}

\newcommand{\ldivmin}{\protect\plslashblack}
\def\prslash{\ifx\@currsize\normalsize
{\mathchoice
{\mbox{\raisebox{0.2ex}{$\scriptstyle\bullet$}\kern-1ex$/$}}
{\mbox{\raisebox{0.2ex}{$\scriptstyle\bullet$}\kern-1ex$/$}}%
{\mbox{\raisebox{0.14ex}{$\scriptscriptstyle\bullet$}\kern-0.8ex%
${\scriptstyle/}$}}%
{\mbox{\raisebox{0.14ex}{$\scriptscriptstyle\bullet$}\kern-0.8ex%
${\scriptstyle/}$}}}%
\else\ifx\@currsize\large\mbox{\raisebox{0.2ex}{$\scriptstyle\bullet$}\kern-1ex$/$}
\else\ifx\@currsize\small\mbox{\raisebox{0.2ex}{$\scriptstyle\bullet$}\kern-1ex$/$}
\else\mbox{\raisebox{0.2ex}{$\scriptstyle\bullet$}\kern-1ex$/$} \fi\fi\fi}
\catcode`@=12
\newcommand{\maxldiv}{\oldiv}
\maketitle

\if 1\IEEE
\thispagestyle{empty}
\pagestyle{empty}
\fi

\begin{abstract}
P-time event graphs (P-TEGs) are specific timed discrete-event systems, in which the timing of events is constrained by intervals.
An important problem is to check, for all natural numbers $d$, the existence of consistent $d$-periodic trajectories for a given P-TEG.
In graph theory, the Proportional-Inverse-Constant-Non-positive Circuit weight Problem (PIC-NCP) consists in finding all the values of a parameter such that a particular parametric weighted directed graph does not contain circuits with positive weight.
In a related paper, we have proposed a strongly polynomial algorithm that solves the PIC-NCP in lower worst-case complexity compared to other algorithms reported in literature.
In the present paper, we show that the first problem can be formulated as an instance of the second; consequently, we prove that the same algorithm can be used to find $d$-periodic trajectories in P-TEGs. 
Moreover, exploiting the connection between the PIC-NCP and max-plus algebra we prove that, given a P-TEG, 
the existence of a consistent 1-periodic trajectory of a certain period is a necessary and sufficient condition for the existence of a consistent $d$-periodic trajectory of the same period, for any value of $d$.
\end{abstract}

\section{Introduction}

P-time event graphs (P-TEGs) constitute a specific class of timed discrete-event systems, namely event graphs in which the sojourn time of tokens in places is constrained by lower and upper bounds.
Events that do not occur in the specified time window correspond, in the real system modeled by the P-TEG, to failure of meeting time specifications.
P-TEGs have been applied for modeling and analyzing time schedules in manufacturing systems such as electroplating lines, baking processes, cluster tools~\cite{becha2013modelling,spacek1999control,declerck2020critical,4456410}.

Timed event graphs are usually assumed to operate under the earliest firing rule~\cite{baccelli1992synchronization}.
This makes their temporal evolution deterministic.
In contrast, in P-TEGs the time evolution of the marking is non-deterministic, since transitions are not required to fire as soon as they are enabled.
In fact, following the earliest firing rule could lead to the violation of upper bound constraints.
For this reason, it is fundamental to characterize the set of all trajectories that are consistent for a P-TEG (i.e., trajectories for which all the constraints are satisfied), and to verify that it is non-empty; in this case, the P-TEG is said to be \textit{consistent}.
In~\cite{zorzenon2020bounded}, we introduced a stronger property than consistency%
; a P-TEG is \textit{boundedly consistent} if it admits at least one consistent trajectory such that the delay between the $k$\textsuperscript{th} firings of every pair of transitions is bounded for all $k$.
Moreover, 
we proved that a P-TEG is boundedly consistent if and only if there exists at least one 1-periodic trajectory\footnote{
We recall that, given a natural number $d$ and a real number $\lambda$, a P-TEG is said to follow a $d$-periodic trajectory of period $\lambda$ when, for all $k$, the $(k+d)$\textsuperscript{th} firing of each transition in the P-TEG occurs $\lambda\times d$ time instants after its $k$\textsuperscript{th} firing. For a formal definition, see~\Cref{se:periodic}.
} that is consistent for the P-TEG.
Since the problem of checking the existence of 1-periodic trajectories can be formulated as a linear programming problem~\cite{becha2013modelling,declerck2017extremum}, and linear programs are known to be solvable (in the worst case) in weakly polynomial time complexity\footnote{
For the definitions of weakly, strongly and pseudo-polynomial time complexity, the reader is referred to~\cite{computers1979,korte2011combinatorial}.
}~\cite{korte2011combinatorial}, bounded consistency can also be verified in the same complexity; on the contrary, no algorithm that checks standard consistency has been found yet.


In graph theory, the \textit{Proportional-Inverse-Constant-Non-positive Circuit weight Problem} (PIC-NCP) consists in finding all values of a real parameter for which a certain parametric weighted directed graph does not contain circuits with positive weight.
In a related paper~\cite{zorzenon2021nonpositive}, we showed that the PIC-NCP can be solved in strongly polynomial time complexity using a modified version of an algorithm discovered by Levner and Kats~\cite{levner1998parametric}, which achieves complexity $\pazocal{O}(n^6)$ in the worst case (where $n$ is the number of nodes in the parametric graph).
In the same paper, we proposed a new algorithm based on max-plus algebra that solves the PIC-NCP in $\pazocal{O}(n^4)$. 
The first aim of the present paper is to show that the problem of finding all periods of $1$-periodic trajectories that are consistent for a P-TEG can be transformed into an instance of the PIC-NCP. 
As an immediate consequence of this result, we prove that bounded consistency can be checked on a P-TEG with $n$ transitions:
	(i) in strongly polynomial time $\pazocal{O}(n^4)$ if the P-TEG has initially at most $1$ token per place;
	(ii) in pseudo-polynomial time $\pazocal{O}(\bar{n}^4)$ (where $\bar{n}$ is evaluated in~\eqref{eq:PTEGtransformation} on page~\pageref{eq:PTEGtransformation}) if the number of tokens initially residing in the places of the P-TEG exceeds $1$ at least in one place.
Additionally, an advantage of our algorithm compared with approaches based on linear programming is that it provides an explicit closed formula for the smallest and largest periods of all admissible 1-periodic trajectories \cite{zorzenon2021nonpositive}.
As a by-product of this result, exploiting the connection between periodic trajectories and the PIC-NCP, in~\Cref{th:fromdto1} we prove a surprising property concerning more general trajectories:
the set of all periods of $d$-periodic trajectories that are consistent for a P-TEG is the same for all natural numbers $d$.
In other terms, a P-TEG admits a 1-periodic trajectory of period $\lambda$ if and only if it admits a $d$-periodic trajectory of the same period, for any value of $d$.
This fact has remarkable practical consequences; whereas verifying the existence of $d$-periodic trajectories (\textit{e.g.}, using linear programming) would normally require a number of operations increasing in $d$, our result implies that it can be done by simply studying 1-periodic trajectories, with significant savings in terms of computational efforts.

The remainder of the paper is organized as follows.
In \Cref{se:alge}, algebraic preliminaries are recalled.
Precedence graphs and the PIC-NCP are introduced in \Cref{se:graphs}.
\Cref{se:PTEGs} contains the formal definition of P-TEGs, and the connection between the PIC-NCP and the problem of checking the existence of consistent $d$-periodic trajectories for P-TEGs is presented in \Cref{se:periodic}.
Finally, conclusions are given in \Cref{se:conclusions}.

\subsection*{Notation}
The set of positive, respectively non-negative, integers is denoted by $\nat$, respectively $\nato$.
The set of non-negative real numbers is denoted by $\R_{\geq 0}$.
Moreover, $\Rmax \coloneqq \R \cup \{-\infty\}$ and $\Rbar \coloneqq \Rmax \cup \{+\infty\}$.

\section{Algebraic preliminaries}\label{se:alge}

In this section, some preliminary concepts from the theory of idempotent semirings and semifields are summarized.
For more details, we refer to~\cite{baccelli1992synchronization,heidergott2014max,hardouin2018control}.

\subsection{Idempotent semirings and semifields}

A set $\D$ endowed with two binary operations $\oplus$ (addition) and $\otimes$ (multiplication), $(\D,\oplus,\otimes)$, is an \textit{idempotent semiring} (or \textit{dioid}), if the following properties hold: $\oplus$ is associative, commutative, idempotent ($\forall a\in\D$, $a\oplus a =a$), and has a neutral element $\varepsilon$; $\otimes$ is associative, distributes over $\oplus$, has a neutral element $e$, and $\varepsilon$ is absorbing for $\otimes$ ($\forall a\in\D$, $a\otimes \varepsilon = \varepsilon \otimes a = \varepsilon$).
The canonical order relation $\preceq$ on $\D$ is defined by: $\forall a,b\in\D$, $a\preceq b \Leftrightarrow a\oplus b = b$.
From this definition, it follows that $\otimes$ is order preserving, i.e., $\forall a,b,c\in\D$, $a\preceq b \Rightarrow a\otimes c \preceq b \otimes c$ and $c\otimes a \preceq c\otimes b$. 
Furthermore, we write that $a\prec b$ if $a\preceq b$ and $a\neq b$.
A \textit{complete dioid} is a dioid that is closed for infinite sums and such that $\otimes$ distributes over infinite sums.
In a complete dioid, $\top\in\D$ denotes the unique greatest element of $\D$, defined as $\top = \bigoplus_{x\in \D} x$.
Moreover, the Kleene star of an element $a\in\D$, denoted $a^*$, is defined by $a^*=\bigoplus_{i\in\nato}a^i$, where $a^0=e$, $a^{i+1}=a\otimes a^i$, and the greatest lower bound $\wedge$ is defined by $a\wedge b = \bigoplus_{\D_{ab}} x$, where $\D_{ab} = \{ x \in \D \ | \ x \preceq a \mbox{ and }  x \preceq b\}$.
The greatest lower bound $\wedge$ is commutative, associative, idempotent, and satisfies the following property: $\forall a,b\in\D$, $a\preceq b \Leftrightarrow a \wedge b = a$.
\begin{remark}\label{re:oplus_separation}
The following equivalence holds: $\forall a,b,c\in\D$, 
$
    a\succeq b \mbox{ and }a\succeq c \ \Leftrightarrow \ a\succeq b\oplus c.
$
Indeed, ($\Leftarrow$) comes from $b\oplus c\succeq b$, $b\oplus c\succeq c$.
($\Rightarrow$) comes from: $a\succeq b\Leftrightarrow a\oplus b = a$, $a\succeq c\Leftrightarrow a\oplus c = a$; therefore, $a\oplus (b\oplus c) = (a\oplus b ) \oplus c = a \oplus c = a$, which is equivalent to $a \succeq b\oplus c$.
Analogously, it is possible to show that: $
    a\preceq b \mbox{ and }a \preceq c \ \Leftrightarrow \ a\preceq b\wedge c.
$
\end{remark}

We can extend operations $\oplus$ and $\otimes$ to matrices as follows: $\forall A,B\in\D^{m\times n}$, $C\in\D^{n\times p}$,
\begin{equation*}
	(A\oplus B)_{ij} = A_{ij} \oplus B_{ij},\quad  
     (A\otimes C)_{ij} = \bigoplus_{k=1}^n (A_{ik}\otimes C_{kj}).
\end{equation*}
Furthermore, the multiplication between a scalar and a matrix is defined as: $\forall \lambda\in\D$, $A\in\D^{m\times n}$, $(\lambda \otimes A)_{ij} = \lambda\otimes A_{ij}$.
If $(\D,\oplus,\otimes)$ is a complete dioid, then $(\D^{n\times n},\oplus,\otimes)$, where $\oplus$ and $\otimes$ are as defined above, is a complete dioid.
In this case, operation $\wedge$ is defined, $\forall A,B\in\D^{n\times n}$, as $(A\wedge B)_{ij} = A_{ij} \wedge B_{ij}$.
The neutral elements for $\oplus$ and $\otimes$ in $(\D^{n\times n},\oplus,\otimes)$ are, respectively, matrices $\pazocal{E}$ and $E_\otimes$, where $\pazocal{E}_{ij} = \varepsilon\ \forall i,j$ and $E_{\otimes ij} = e$ if $i=j$, $E_{\otimes ij} = \varepsilon$ if $i\neq j$.
Furthermore, $A\preceq B\ \Leftrightarrow \ A\oplus B = B\  \Leftrightarrow $ $\forall i,j\ A_{ij} \preceq B_{ij}$.

\begin{definition}[From \cite{brunsch2012duality}]
	A binary operation $\odot$ is a \textit{dual multiplication} in the complete dioid $(\D,\oplus,\otimes)$ if it is associative, distributes over $\wedge$, $e$ is its neutral element, and $\top$ is absorbing for $\odot$.
\end{definition}
Operation $\odot$ can be extended to matrices as: $\forall A\in\D^{m\times n}$, $C\in\D^{n\times p}$, $\lambda\in\D$,
\begin{align*}
	(A\odot C)_{ij} = \bigwedge_{k=1}^n (A_{ik}\odot C_{kj}), \quad
	(\lambda \odot A)_{ij} = \lambda \odot A_{ij}\ ; 
\end{align*}
moreover, if $\odot$ is a dual product in $(\D,\oplus,\otimes)$, then its extension to matrices is a dual product in $(\D^{n\times n},\oplus,\otimes)$.

We say that a complete dioid $(\D,\oplus,\otimes)$ is a \textit{complete idempotent semifield} if, $\forall a\in\D\setminus\{\varepsilon,\top\}$, there exists the multiplicative inverse of $a$, indicated by $a^{-1}$, i.e., $a\otimes a^{-1} = a^{-1} \otimes a = e$.
For notational convenience we write $\varepsilon^{-1}\coloneqq\top$, and $\top^{-1}\coloneqq\varepsilon$.
Note however that $\top\otimes\varepsilon=\varepsilon\otimes\top=\varepsilon$.

\begin{proposition}\label{pr:dual_prod_def}
Let $(\D,\oplus,\otimes)$ be a complete idempotent semifield. 
Then operation $\odot$, defined as
\begin{equation}\label{eq:dual_prod}
	a\odot b = 
	\begin{cases}
        a\otimes b & \mbox{if } a,b\in\D\setminus\{\top\}\\
        \top & \mbox{if } a = \top \mbox{ or } b = \top
	\end{cases}~~~,
\end{equation}
is a dual product for $(\D,\oplus,\otimes)$.
\end{proposition}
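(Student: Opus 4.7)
I plan to verify directly the four defining properties of a dual multiplication: (i) associativity, (ii) distributivity over $\wedge$, (iii) $e$ as neutral element, and (iv) $\top$ as absorbing element. Items (iv) and (iii) are immediate: (iv) holds by construction of~\eqref{eq:dual_prod}, and (iii) follows because, for $a\neq\top$, $a\odot e = a\otimes e = a$, while $\top\odot e = \top$ by the second branch of~\eqref{eq:dual_prod}.

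The crux of the argument is a closure lemma needed for both (i) and (ii): if $a,b\in\D\setminus\{\top\}$, then $a\otimes b\in\D\setminus\{\top\}$. To establish this I would first show that $a\otimes\top=\top$ for every $a\neq\varepsilon$, by writing $\top=\bigoplus_{x\in\D}x$, invoking distributivity of $\otimes$ over infinite sums in the complete dioid, and observing that for $a\neq\varepsilon,\top$ the map $x\mapsto a\otimes x$ is a bijection of $\D$ onto itself (with inverse $x\mapsto a^{-1}\otimes x$, and fixing $\varepsilon$), so that $a\otimes\top=\bigoplus_{x\in\D}(a\otimes x)=\top$. Then, if $a\otimes b=\top$ with $a,b\neq\top$, the case $a=\varepsilon$ would give $a\otimes b=\varepsilon$, a contradiction, and otherwise multiplying by $a^{-1}$ would give $b=a^{-1}\otimes\top=\top$, again a contradiction.

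Given this lemma, associativity of $\odot$ is a short case split: if one of $a,b,c$ equals $\top$, both $(a\odot b)\odot c$ and $a\odot(b\odot c)$ collapse to $\top$ by (iv) together with the closure lemma; otherwise every intermediate product stays in $\D\setminus\{\top\}$ and both expressions coincide with $a\otimes b\otimes c$ by associativity of $\otimes$. For distributivity over $\wedge$, I would again split cases on whether any of $a,b,c$ equals $\top$: if $c=\top$, both sides equal $\top$; if $c\neq\top$ but $a=\top$ (symmetrically $b=\top$), the identities $a\wedge b=b$ and $\top\wedge x=x$ collapse both sides to $b\odot c$; and the remaining case, $a,b,c\in\D\setminus\{\top\}$, reduces to $(a\wedge b)\otimes c=(a\otimes c)\wedge(b\otimes c)$, which I would prove by noting that for $c\neq\varepsilon,\top$ the map $\varphi_c:x\mapsto x\otimes c$ is an order isomorphism on $\D$ (with inverse $\varphi_{c^{-1}}$) and hence preserves arbitrary greatest lower bounds, while the case $c=\varepsilon$ makes both sides equal to $\varepsilon$.

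The main obstacle I expect is the careful bookkeeping around the extreme elements $\varepsilon$ and $\top$, in particular proving $a\otimes\top=\top$ for $a\neq\varepsilon$: it is precisely the defect $\varepsilon\otimes\top=\varepsilon\neq\top$ of the ordinary product that $\odot$ is designed to repair in~\eqref{eq:dual_prod}, and once the closure of $\D\setminus\{\top\}$ under $\otimes$ is in hand the remaining verifications are routine case analysis.
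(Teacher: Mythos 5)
Your proposal is correct and follows essentially the same route as the paper: the paper also reduces everything to distributivity of $\odot$ over $\wedge$ for elements of $\D\setminus\{\varepsilon,\top\}$, which it disposes of by citing Lemma 4.36 of Baccelli et al.\ (multiplication by an invertible element preserves infima) --- exactly the fact you prove directly via the order isomorphism $x\mapsto x\otimes c$. Your write-up is more self-contained than the paper's, in particular in making explicit the closure of $\D\setminus\{\top\}$ under $\otimes$ and the bookkeeping around $\varepsilon$ and $\top$, which the paper leaves implicit.
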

\begin{proof}
The only non-trivial property to prove is distributivity of $\odot$ over $\wedge$, i.e., $a\odot(b\wedge c) = (a\odot b)\wedge (a\odot c)$ and $(b\wedge c)\odot a = (b\odot a)\wedge (c\odot a)$, when $a,b,c\in\D\setminus\{\varepsilon,\top\}$; note that this is a direct consequence of Lemma 4.36 of~\cite{baccelli1992synchronization}.
\end{proof}

In the following two propositions (the first of which is an alternative version of Lemma 9-1 from~\cite{cuninghame2012minimax}), we will let $(\D,\oplus,\otimes)$ be a complete idempotent semifield, and $\odot$ be defined as in~\eqref{eq:dual_prod}.

\begin{proposition}\label{pr:inversion_linear_inequality}
For all $ A\in(\D\setminus \{\top\})^{n\times n}$, $x\in(\D\setminus \{\top\})^n$,
\begin{equation}\label{eq:otimes_odot}
	A\otimes x \preceq x \quad \Leftrightarrow \quad x \preceq A^\sharp \odot x, 
\end{equation}
where $(A^\sharp)_{ij} \coloneqq (A_{ji})^{-1}$.
\end{proposition}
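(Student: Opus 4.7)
The plan is to unpack both sides of the equivalence componentwise and reduce everything to a scalar equivalence of the form $a\otimes b \preceq c \Leftrightarrow b \preceq a^{-1}\odot c$, which is essentially the defining property of $\odot$ together with invertibility in the semifield.

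First, I would translate $A\otimes x \preceq x$ into: for every $i$, $\bigoplus_{j=1}^n A_{ij}\otimes x_j \preceq x_i$, and then invoke the iterated version of \Cref{re:oplus_separation} to rewrite this as: for all $i,j$, $A_{ij}\otimes x_j \preceq x_i$. Symmetrically, the right-hand side $x\preceq A^\sharp\odot x$ becomes, for every $j$, $x_j \preceq \bigwedge_{i=1}^n (A^\sharp)_{ji}\odot x_i = \bigwedge_{i=1}^n (A_{ij})^{-1}\odot x_i$, and the dual (meet) version of \Cref{re:oplus_separation} turns this into: for all $i,j$, $x_j \preceq (A_{ij})^{-1}\odot x_i$. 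Hence the whole problem collapses to showing, for each pair $(i,j)$, the scalar equivalence
\begin{equation*}
A_{ij}\otimes x_j \preceq x_i \quad \Leftrightarrow \quad x_j \preceq (A_{ij})^{-1}\odot x_i.
\end{equation*}

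To establish this scalar equivalence I would split by cases on $A_{ij}$. If $A_{ij}=\varepsilon$, the left-hand side holds trivially since $\varepsilon\otimes x_j=\varepsilon\preceq x_i$; meanwhile $(A_{ij})^{-1}=\varepsilon^{-1}=\top$, and by definition~\eqref{eq:dual_prod} $\top\odot x_i = \top$, so the right-hand side also holds trivially. If $A_{ij}\in\D\setminus\{\varepsilon,\top\}$, then $A_{ij}$ admits a genuine multiplicative inverse $(A_{ij})^{-1}\in\D\setminus\{\varepsilon,\top\}$, and since neither $(A_{ij})^{-1}$ nor $x_i$ is $\top$ (by the standing hypotheses $A\in(\D\setminus\{\top\})^{n\times n}$ and $x\in(\D\setminus\{\top\})^n$), \eqref{eq:dual_prod} gives $(A_{ij})^{-1}\odot x_i = (A_{ij})^{-1}\otimes x_i$. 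The equivalence then follows by multiplying on the left by $(A_{ij})^{-1}$ (for $\Rightarrow$) and by $A_{ij}$ (for $\Leftarrow$), using the order-preservation of $\otimes$ and the fact that $A_{ij}\otimes (A_{ij})^{-1}=e$.

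The main obstacle is really just bookkeeping: one has to be careful that the hypothesis excluding $\top$ from the entries is exactly what makes the scalar cancellation legitimate (multiplying by $\top$ is absorbing in $\otimes$ and would destroy the equivalence), and that the switch from $\otimes$ to $\odot$ on the right-hand side is forced precisely by the possibility that $A_{ij}=\varepsilon$, so that $(A_{ij})^{-1}=\top$ appears and the dual product handles it correctly. Once the two componentwise reductions (via \Cref{re:oplus_separation} and its dual) are in place, chaining the scalar equivalence over all $(i,j)$ yields the result.
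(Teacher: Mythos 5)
Your proposal is correct and follows essentially the same route as the paper's own proof: a componentwise reduction of both sides via \Cref{re:oplus_separation} and its dual, followed by the scalar equivalence handled by the case split $A_{ij}=\varepsilon$ versus $A_{ij}$ invertible. The only cosmetic difference is ordering (the paper proves the scalar case first and then does the matrix reduction), so there is nothing substantive to add.
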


\begin{proof}
Let us first show that the proposition holds in the scalar case ($n = 1$), in which $A^\sharp = A^{-1}$.
If $A\in\D \setminus \{\varepsilon, \top\}$, then, from the facts that $\otimes$ is order preserving and $A$ has a multiplicative inverse, $A\otimes x \preceq x \ \Rightarrow\  A^{-1} \otimes A \otimes x = x \preceq A^{-1} \otimes x = A^{-1} \odot x$,
and
$
	x\preceq A^{-1} \odot x = A^{-1}\otimes x\ \Rightarrow\ A\otimes x\preceq A \otimes A^{-1} \otimes x = x.
$
In case $A=\varepsilon$, the proof comes from the absorbing properties of $\varepsilon$ over $\otimes$ and of $\top$ over $\odot$.

In the matrix case, the left-hand side inequality of~\eqref{eq:otimes_odot} can be written as
$\forall i\in\{1,\ldots,n\}$, $\bigoplus_{j=1}^n A_{ij} \otimes x_j \preceq x_i$, which, from \Cref{re:oplus_separation}, is equivalent to, $\forall i,j\in\{1,\ldots,n\}$, $A_{ij} \otimes x_j \preceq x_i$.
From the scalar case, the latter expression is equivalent to, $\forall i,j\in\{1,\ldots,n\}$, $x_j \preceq (A_{ij})^{-1} \odot x_i$, which, using again \Cref{re:oplus_separation}, can be written as $\forall j\in\{1,\ldots,n\}$ $x_j \preceq \bigwedge_{i=1}^{n} (A_{ij})^{-1} \odot x_i $, or, compactly, as $x\preceq A^\sharp \odot x$.
\end{proof}

\begin{proposition} \label{pr:AxleqxleqBx}
For all $A\in (\D\setminus \{\top\})^{n\times n}$, $B\in (\D\setminus \{\varepsilon\})^{n\times n}$ and $x \in (\D\setminus \{\top\})^n$, $A \otimes x \preceq x \preceq B \odot x$ if and only if $(A \oplus B^\sharp) \otimes x\preceq x$.
\end{proposition}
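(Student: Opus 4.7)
The plan is to reduce the proposition to \Cref{pr:inversion_linear_inequality} combined with \Cref{re:oplus_separation}, applied entry-wise in the matrix setting.

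First, I would observe that the operator $(\cdot)^\sharp$ is an involution on matrices with entries in $\D\setminus\{\varepsilon,\top\}$, and that it maps $(\D\setminus\{\varepsilon\})^{n\times n}$ into $(\D\setminus\{\top\})^{n\times n}$, since $\varepsilon^{-1}=\top$ and $\top^{-1}=\varepsilon$. Concretely, $((B^\sharp)^\sharp)_{ij}=((B^\sharp)_{ji})^{-1}=((B_{ij})^{-1})^{-1}=B_{ij}$. Hence, the hypothesis $B\in(\D\setminus\{\varepsilon\})^{n\times n}$ guarantees that $B^\sharp$ lies in the domain required to apply \Cref{pr:inversion_linear_inequality}.

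Next, I would apply \Cref{pr:inversion_linear_inequality} with the role of $A$ played by $B^\sharp$. This yields the equivalence
\begin{equation*}
B^\sharp \otimes x \preceq x \quad\Leftrightarrow\quad x \preceq (B^\sharp)^\sharp \odot x = B\odot x,
\end{equation*}
so the right-hand inequality $x\preceq B\odot x$ is equivalent to $B^\sharp\otimes x\preceq x$. The double inequality $A\otimes x\preceq x\preceq B\odot x$ therefore becomes the pair of inequalities $A\otimes x\preceq x$ and $B^\sharp\otimes x\preceq x$.

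Finally, using distributivity of $\otimes$ over $\oplus$, $(A\oplus B^\sharp)\otimes x = (A\otimes x)\oplus (B^\sharp\otimes x)$, and \Cref{re:oplus_separation} applied componentwise gives
\begin{equation*}
A\otimes x\preceq x \text{ and } B^\sharp\otimes x\preceq x \quad\Leftrightarrow\quad (A\otimes x)\oplus (B^\sharp\otimes x)\preceq x,
\end{equation*}
which is exactly $(A\oplus B^\sharp)\otimes x\preceq x$, completing the chain of equivalences. The only real subtlety (and thus the only potential obstacle) is the bookkeeping of forbidden values $\varepsilon$ and $\top$: one must verify that $B\in(\D\setminus\{\varepsilon\})^{n\times n}$ correctly translates into $B^\sharp\in(\D\setminus\{\top\})^{n\times n}$, so that \Cref{pr:inversion_linear_inequality} is invoked on admissible data, and that the involution property $(B^\sharp)^\sharp=B$ holds without encountering the special cases $\varepsilon^{-1}=\top$ on entries where both inversions would be needed.
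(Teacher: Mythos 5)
Your proof is correct and follows essentially the same route as the paper's: split the double inequality, use \Cref{pr:inversion_linear_inequality} to convert $x\preceq B\odot x$ into $B^\sharp\otimes x\preceq x$, and combine via \Cref{re:oplus_separation} and distributivity. The only difference is that you make explicit the domain bookkeeping ($B^\sharp\in(\D\setminus\{\top\})^{n\times n}$ and $(B^\sharp)^\sharp=B$), which the paper leaves implicit; this is a welcome clarification, not a different argument.
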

\begin{proof}
The inequalities to the left of the "if and only if" are equivalent to $A\otimes x\preceq x$ and $x\preceq B \odot x$.
From \Cref{pr:inversion_linear_inequality}, the second inequality can be written as $B^\sharp \otimes x\preceq x$.
Using the first statement of \Cref{re:oplus_separation}, the two inequalities can be combined into: $A \maxtimes x \maxplus B^\sharp \maxtimes x = (A \maxplus B^\sharp) \maxtimes x\preceq x$.
\end{proof}

In the following, we recall the definition of tensor (or Kronecker) product and one of its properties, which holds in \textit{commutative dioids}, i.e., dioids in which the product $\otimes$ is commutative.
Let $\MA\in\D^{m\times n}$ and $\MB\in\D^{p\times q}$; the tensor product $\MA\totimes\MB$ is defined as the block matrix
\begin{equation*}\small
	\MA \totimes \MB = 
	\begin{bmatrix}
		\MA_{11}\otimes \MB & \cdots & \MA_{1n} \otimes \MB\\
		\vdots & & \vdots\\
		\MA_{m1} \otimes \MB & \cdots & \MA_{mn} \otimes \MB 
	\end{bmatrix}.
\end{equation*}

\begin{proposition}[Mixed product property] \label{pr:mixed}
	Let $(\D,\oplus,\otimes)$ be a commutative dioid, $\MA\in\D^{m\times n}$, $\MB\in\D^{p\times q}$, $\MC\in\D^{n\times k}$, $D \in\D^{q\times r}$. Then $(\MA\totimes\MB)\otimes(\MC\totimes D) = (\MA\otimes \MC)\totimes (\MB\otimes D)$.
\end{proposition}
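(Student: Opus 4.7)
The plan is to compute the $((i,j),(s,t))$-entry of each side directly, using the natural block indexing that comes with the tensor product, and reduce the identity to commutativity of $\otimes$ together with distributivity of $\otimes$ over $\oplus$. The argument mirrors the classical proof of the Kronecker mixed-product identity; the only ingredient specific to the dioid setting is commutativity of $\otimes$, which is exactly the standing hypothesis.

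First I would set up indices. With $\MA\in\D^{m\times n}$ and $\MB\in\D^{p\times q}$, I view $\MA\totimes\MB$ as a matrix whose rows are indexed by pairs $(i,j)\in\{1,\ldots,m\}\times\{1,\ldots,p\}$ and whose columns are indexed by pairs $(\alpha,\beta)\in\{1,\ldots,n\}\times\{1,\ldots,q\}$, so that by the definition of $\totimes$ one has $(\MA\totimes\MB)_{(i,j),(\alpha,\beta)} = \MA_{i\alpha}\otimes\MB_{j\beta}$. Analogously, with $\MC\in\D^{n\times k}$ and $D\in\D^{q\times r}$, one has $(\MC\totimes D)_{(\alpha,\beta),(s,t)} = \MC_{\alpha s}\otimes D_{\beta t}$ for $(s,t)\in\{1,\ldots,k\}\times\{1,\ldots,r\}$. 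This block indexing is the one presentation point that must be handled carefully, since the left-hand-side product is ordinary matrix multiplication in the $mp \times nq$ and $nq \times kr$ formats, which must be shown to agree with the block structure of the right-hand side.

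The central step is then a short manipulation. Expanding the matrix product in the standard way gives
\begin{equation*}
\bigl((\MA\totimes\MB)\otimes(\MC\totimes D)\bigr)_{(i,j),(s,t)} = \bigoplus_{\alpha=1}^{n}\bigoplus_{\beta=1}^{q}(\MA_{i\alpha}\otimes\MB_{j\beta})\otimes(\MC_{\alpha s}\otimes D_{\beta t}).
\end{equation*}
Commutativity of $\otimes$ lets me rewrite each summand as $(\MA_{i\alpha}\otimes\MC_{\alpha s})\otimes(\MB_{j\beta}\otimes D_{\beta t})$, and distributivity of $\otimes$ over $\oplus$ then splits the double sum as
\begin{equation*}
\Bigl(\bigoplus_{\alpha=1}^{n}\MA_{i\alpha}\otimes\MC_{\alpha s}\Bigr)\otimes\Bigl(\bigoplus_{\beta=1}^{q}\MB_{j\beta}\otimes D_{\beta t}\Bigr) = (\MA\otimes\MC)_{is}\otimes(\MB\otimes D)_{jt},
\end{equation*}
which is exactly the $((i,j),(s,t))$-entry of $(\MA\otimes\MC)\totimes(\MB\otimes D)$. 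Since the entries agree for every choice of $(i,j)$ and $(s,t)$, the two matrices coincide.

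There is no real mathematical obstacle here; the proof is essentially combinatorial bookkeeping. The only place where a hypothesis is used in a decisive way is the commutativity step that allows swapping $\MB_{j\beta}$ past $\MC_{\alpha s}$, and this is precisely what forces the restriction to commutative dioids. Hence the hard part is just presentational: writing the indexing cleanly enough that the three-line calculation above is unambiguous.
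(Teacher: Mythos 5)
Your proof is correct, and it is essentially the paper's own argument: the paper simply defers to the classical proof of the Kronecker mixed-product identity (Lemma 4.2.10 of Horn and Johnson) with $+$ replaced by $\oplus$ and $\times$ by $\otimes$, which is exactly the entrywise computation you carry out. The only hypothesis used beyond the dioid axioms is commutativity of $\otimes$, and you correctly identify where it enters.
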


\begin{proof}
	The proof is analogous (after replacing $+$ with $\oplus$ and $\times$ with $\otimes$) to the one of Lemma 4.2.10 of~\cite{horn1991topics}.
\end{proof}

\subsection{Max-plus algebra}

The max-plus algebra, $(\Rbar,\oplus,\otimes)$, is the complete idempotent semifield defined as the set $\Rbar$ endowed with operations $\max$, indicated by $\oplus$, and $+$, indicated by $\otimes$.
In $(\Rbar,\oplus,\otimes)$, $\varepsilon=-\infty$, $e=0$, $\top=\infty$, $\wedge$ coincides with the standard $\min$, and $\preceq$ coincides with the standard $\leq$.
Following \Cref{pr:dual_prod_def}, the dual product $\odot$ is defined such that it coincides with $\otimes$, except when one of its arguments is $\top$.
Moreover, note that multiplication $\otimes$ and dual multiplication $\odot$ are commutative in $(\Rbar,\oplus,\otimes)$.
Since $(\Rbar,\oplus,\otimes)$ is a complete idempotent semifield, $(\Rbar^{n\times n},\oplus,\otimes)$ is a complete dioid.
If $A\in\Rbar^{n\times n}$ then $A^\sharp$ is known as Butkovic conjugate matrix~\cite{butkovivc2010max} and coincides, in standard algebra, with $-A^\intercal$.
We will indicate the product between a scalar $\lambda\in\R$ and a matrix $A\in\Rbar^{m\times n}$, $\lambda\otimes A = \lambda \odot A$, simply by $\lambda A$.

\section{Precedence graphs}\label{se:graphs}

A \textit{directed graph} is a pair $(\nodes,\arcs)$ where $\nodes$ is the set of nodes and $\arcs\subseteq \nodes \times \nodes$ is the set of arcs.
Given two nodes $i,j\in\nodes$ of a directed graph, we say that $j$ is a downstream node of $i$ (and $i$ is an upstream node of $j$) if $(i,j)\in\arcs$.
A \textit{weighted directed graph} is a 3-tuple $(\nodes,\arcs,w)$, where $(\nodes,\arcs)$ is a directed graph, and $w:\arcs\rightarrow \R$ is a function that associates a weight $w((j,i))$ to every arc $(j,i)$ of the graph. 

\begin{definition}[Precedence graph]
	Let $\MA\in\Rmax^{n\times n}$.
	The \textit{precedence graph} associated with $\MA$ is the weighted directed graph $\graph( \MA )=(\nodes,\arcs,w)$, where $\nodes=\{1,\ldots,n\}$, $\arcs$ is defined such that there is an arc $(j,i)\in \arcs$ from node $j$ to node $i$ if and only if $ \MA _{ij}\neq -\infty$, and $w$ is such that $w((j,i))= \MA _{ij}$, for every arc $(j,i)\in\arcs$.
	If matrix $ \MA $ depends on some real parameters, $ \MA  = \MA(\lambda_1,\ldots,\lambda_p)$, $\lambda_1,\ldots,\lambda_p\in \R$, we say that $\graph( \MA )$ is a \textit{parametric precedence graph}.
\end{definition}
A \textit{path} $\rho$ in $\graph( \MA )=(\nodes,\arcs,w)$ is a sequence of nodes $(\rho_1,\rho_2,\ldots,\rho_{r+1})$, $r \geq 1$, with arcs from node $\rho_i$ to node $\rho_{i+1}$, such that $\forall i\in\{1,\ldots,r\}$,  $(\rho_i,\rho_{i+1})\in \arcs$.
The length of a path $\rho$ is denoted by $|\rho|_L=r$, and its weight, denoted by $|\rho|_W$, is the max-plus product (standard sum) of the weights of its arcs: $|\rho|_W= \bigotimes_{i=1}^{r}  \MA _{\rho_{i+1},\rho_i}$.
A path $\rho$ is called \textit{circuit} if its initial and final nodes coincide, i.e., if $\rho_1=\rho_{|\rho|_L+1}$.
We recall that, if $\MA\in\Rmax^{n\times n}$, $(\MA^r)_{ij}$ is equal to the maximum weight of all paths of $\graph(\MA)$ from node $j$ to node $i$ of length $r$.
We indicate by $\nonegset$ the set of all precedence graphs that do not contain circuits with positive weight.
The following statement connects max-plus linear inequalities with graphs.
    
\begin{proposition}[From~\cite{baccelli1992synchronization,gallai1958maximum,butkovivc2010max}] \label{pr:v_leq_Bv} 
Let $\MA\in\Rmax^{n\times n}$.
Inequality $\MA\otimes x \preceq x$ has at least one solution $x\in\R^n$ if and only if $\graph(\MA)\in\nonegset$.
Moreover, if $\graph(\MA)\in\nonegset$, then $\MA^*$ can be computed in $\pazocal{O}(n^3)$, and $\{x\in\R^n\ | \ \MA \otimes x \preceq x \} \ =\  \{ \MA^*\otimes u \ | \ u\in\R^n\}$.
\end{proposition}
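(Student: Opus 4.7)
The plan is to base the proof on the standard interpretation of max-plus matrix powers as maximum path weights: by straightforward induction, $(\MA^k)_{ij}$ equals the maximum weight over all paths in $\graph(\MA)$ from $j$ to $i$ of length exactly $k$, as already noted in the excerpt. With this in hand, both directions of the equivalence and the characterization of the solution set follow from manipulations of the Kleene star $\MA^* = \bigoplus_{k \in \nato} \MA^k$.

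For the forward implication I would exploit that $\otimes$ is order preserving, so $\MA \otimes x \preceq x$ iterates to $\MA^k \otimes x \preceq x$ for every $k \in \nat$. Reading off the $(i,i)$-entry gives $(\MA^k)_{ii} + x_i \leq x_i$, which, since $x_i\in\R$, forces $(\MA^k)_{ii} \leq 0$ for every $k$ and $i$. Because $(\MA^k)_{ii}$ is the maximum weight of a circuit of length $k$ through node $i$, no positive circuit can exist, so $\graph(\MA) \in \nonegset$. For the converse, I would argue that when $\graph(\MA) \in \nonegset$ any walk of length $\geq n$ contains a circuit whose deletion does not decrease the weight of what remains; by this path-shortening argument the truncation $\MA^* = \bigoplus_{r=0}^{n-1}\MA^r$ already realizes the entire Kleene sum and has only finite entries. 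Then $\MA \otimes \MA^* \preceq \MA^*$, so any $x = \MA^* \otimes u$ with $u \in \R^n$ satisfies $\MA \otimes x \preceq x$.

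To pin down the whole solution set, one inclusion is the construction just described. For the other, if $\MA \otimes x \preceq x$, iterating gives $\MA^k \otimes x \preceq x$ for all $k\in\nato$; summing over $k$ yields $\MA^* \otimes x \preceq x$, while $\MA^* \succeq E_\otimes$ gives $x \preceq \MA^* \otimes x$. Hence $x = \MA^* \otimes x$, exhibiting $x$ itself as $\MA^* \otimes u$ with $u = x$. The $\pazocal{O}(n^3)$ bound for $\MA^*$ I would obtain from the standard Floyd--Warshall-style procedure for max-plus transitive closure. The only conceptually non-routine ingredient is the path-shortening argument that truncates the Kleene sum at $n-1$, and that is where I would expect to spend the most care; the rest is bookkeeping.
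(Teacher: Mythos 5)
Your argument is correct and is precisely the standard proof of this classical result; the paper itself states \Cref{pr:v_leq_Bv} without proof, importing it from the cited references, and your reasoning (iterating $\MA\otimes x\preceq x$ to bound circuit weights, the circuit-deletion argument truncating $\MA^*$ at $\bigoplus_{r=0}^{n-1}\MA^r$, and the fixed-point identity $x=\MA^*\otimes x$) is exactly what those references do. One small imprecision: when $\graph(\MA)\in\nonegset$ the entries of $\MA^*$ need not all be finite (off-diagonal entries can be $-\infty$ when no path exists); what you actually need, and what your argument delivers, is that no entry is $+\infty$ while $\MA^*\succeq E_\otimes$, which together guarantee $\MA^*\otimes u\in\R^n$ for $u\in\R^n$.
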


\subsection{The non-positive circuit weight problem}

Given a parametric precedence graph $\graph( \MA )$, $\MA = \MA(\lambda_1,\ldots,\lambda_p)\in\Rmax^{n\times n}$, the \textit{Non-positive Circuit weight Problem} (NCP) consists in finding all $\lambda_1,\ldots,\lambda_p\in\R$ such that $\graph( \MA )\in\nonegset$.
The solution set of the NCP is indicated by $\solNCP{\MA} \coloneqq \{(\lambda_1,\ldots,\lambda_p)\in\R^p\ | \ \graph(\MA(\lambda_1,\ldots,\lambda_p))\in\nonegset\}$.
Note that, from \Cref{pr:v_leq_Bv}, $\solNCP{\MA}$ coincides with the set of $\lambda_1,\ldots,\lambda_p\in\R$ such that the following inequality admits at least one solution $x\in\R^n$:
\begin{equation}\label{eq:NCP}
	\MA(\lambda_1,\ldots,\lambda_p)\otimes x \preceq x.
\end{equation}

In the following section, we will be particularly interested in a subclass of the NCP, which is described as follows.
Let $\MP,\MI,\MC$ denote three arbitrary $n\times n$ matrices in $\Rmax$ and let us consider the parametric precedence graph $\graphPIC$, in which the weights of the arcs depend on a single parameter $\lambda\in\R$ in a proportional ($\lambda\MP$), inverse ($\lambda^{-1}\MI$) and constant ($\MC$) way, in the max-plus sense.
The NCP on graph $\graphPIC$ is then called \textit{Proportional-Inverse-Constant}-NCP (PIC-NCP).
In standard algebra, the weight of a generic arc $(j,i)$ of $\graphPIC$ has the form $w((j,i)) = \max(\MP_{ij} + \lambda, \MI_{ij} - \lambda, \MC_{ij})$.

\section{P-time event graphs}\label{se:PTEGs}

In this section, P-time event graphs (P-TEGs) are presented following~\cite{zorzenon2020bounded}.%

\subsection{General description of P-time event graphs} \label{su:P-TEGgeneral}

\begin{definition}
An \textit{event graph} is a 4-tuple $(\pazocal{P},\pazocal{T},E,m)$, where $(\pazocal{P}\cup\pazocal{T},E)$ is a directed graph in which the set of nodes is partitioned into the set of places, $\pazocal{P}$ (represented as circles), and the set of transitions, $\pazocal{T}$ (represented as bars), the set of arcs $E$ is such that $E\subseteq (\pazocal{P}\times \pazocal{T})\cup(\pazocal{T}\times \pazocal{P})$, every place has exactly one upstream transition and one downstream transition\footnote{The notion of upstream and downstream places/transitions derives from the definition of upstream and downstream nodes in a directed graph.}, and $m:\pazocal{P}\rightarrow \mathbb{N}_0$ is a map such that $m(p)$ is the initial marking of place $p\in \pazocal{P}$, i.e., the initial number of tokens (represented as dots) in $p$.
\end{definition}

In an event graph, a transition $t\in\pazocal{T}$ is enabled if all its upstream places contain at least $1$ token, or if there are no upstream places of $t$.
When a transition is enabled, it can fire: its firing causes the marking of all its upstream places to decrease by $1$, and the marking of all its downstream places to increase by $1$.

\begin{definition}[From~\cite{CALVEZ19971487}]
A \textit{P-time event graph} (P-TEG) is a 5-tuple $(\pazocal{P},\pazocal{T}$, $E,m,\iota)$, where $(\pazocal{P},\pazocal{T},E,m)$ is an event graph, and $\iota:\pazocal{P}\rightarrow \{ [\tau^-,\tau^+]| \ \tau^-\in \R_{\geq 0},\ \tau^+\in (\R_{\geq 0}\cup \{+\infty\}),\ \tau^-\leq \tau^+\}$
is a map that associates to every place $p\in \pazocal{P}$ a time interval $\iota(p) = [\tau^-_p,\tau^+_p]$, where $\tau_p^-$,  respectively $\tau_p^+$, represent minimal, respectively maximal, sojourn times of tokens in place $p$.
\end{definition}

In addition to the rules stated for event graphs, the dynamics of P-TEGs must respect the following firing rule.
After a token enters a place $p$ at time $\tau$, it must remain there at least until time $\tau + \tau^-_p$ and at most until time $\tau + \tau^+_p$.
If a token does not leave place $p$ before or at time $\tau + \tau^+_p$, then it is said to be dead.
This situation can occur if, for instance, the downstream transition of $p$ is not enabled in time due to a late arrival of tokens in other upstream places of the transition.
We assume that the firing of a transition causes tokens from the upstream places to move instantly to the downstream places; therefore, no delay is associated with the firing of transitions.
Note that, since firing of transitions can occur at any time instant that satisfies the bounds associated with every upstream place, the time evolution of the marking of a P-TEG is non-deterministic.

We recall from~\cite{vspavcek2017analysis} that every P-TEG $(\pazocal{P},\pazocal{T},\arcs,m,\iota)$ in which there exist places with initial marking greater than $1$ can be transformed, without modifying the (non-deterministic) behavior of the original P-TEG, into another one, in which the initial marking of all places is $0$ or $1$.
If $n=|\pazocal{T}|$ is the number of transitions in the original P-TEG, then the number of transitions in the transformed P-TEG is
\begin{equation}\label{eq:PTEGtransformation}
	\bar{n} = n + \sum_{p\in\pazocal{P}} \max(0,m(p)-1)~~.
\end{equation}
Hence, without loss of generality, in the remainder of the paper, only P-TEGs in which every place has zero or unitary initial marking are considered, i.e., $\forall p\in\pazocal{P}$, $m(p)\in\{0,1\}$.

\subsection{Dynamics of a P-TEG in the max-plus algebra}\label{su:P_TEG_dynamics}

In the following, we characterize the time evolution of the marking of a P-TEG as a dynamical system of inequalities in the max-plus algebra.
Let $(\pazocal{P},\pazocal{T},E,m,\iota)$ be a P-TEG with $n$ transitions $t_1,\ldots,t_n\in \pazocal{T}$.
Let the four square matrices $A^0,A^1\in(\R_{\geq 0}\cup\{-\infty\})^{n\times n}$, $B^0, B^1\in (\R_{\geq 0}\cup\{+\infty\})^{n\times n}$ be defined as follows:
if there exists a place $p$ with $m(p)\in\{0,1\}$, upstream transition $t_j$ and downstream transition $t_i$, then $A^{m(p)}_{ij}$ and $B^{m(p)}_{ij}$ represent, respectively, the lower and upper bound of interval $\iota(p)$, i.e., $\iota(p)=[A^{m(p)}_{ij},B^{m(p)}_{ij}]$;
otherwise, if there is no place $p$ with initial marking $\mu\in\{0,1\}$, upstream transition $t_j$ and downstream transition $t_i$, then $A^\mu_{ij}=-\infty$, $B^\mu_{ij}=+\infty$.
Let $x:\mathbb{N}_0\rightarrow \R^n$ be a dater function, i.e., $x_i(k)$ represents the time at which transition $t_i$ fires for the $(k+1)$\textsuperscript{st} time.
It is natural to require that, for every transition, the $(k+1)$\textsuperscript{st} firing cannot occur before the $k$\textsuperscript{th} firing: $\forall i\in\{1,\ldots,n\}$, $\forall k\in \mathbb{N}$ $x_i(k)\geq x_i(k-1)$.
Hence, we will consider only daters that are non-decreasing in $k$.

In order to respect the constraints given by the time bounds $\iota(p)\ \forall p\in \pazocal{P}$, the dater function must satisfy
\begin{equation}\label{eq:dynamics}
\forall k\in\nato\qquad 
\left\{
\begin{array}{rcl}
A^0 \maxtimes x(k) \preceq & x(k) & \preceq B^0\mintimes x(k) \\
A^1\maxtimes x(k) \preceq & x(k+1) & \preceq B^1\mintimes x(k) 
\end{array}
\right.
\end{equation}
(see~\cite{paek2020analysis} for more details).
A non-decreasing trajectory $\{ x(k)\}_{k\in\nato}$ that satisfies~\eqref{eq:dynamics} for every $k\in \mathbb{N}_0$ is said to be \textit{consistent}, or \textit{admissible} for the P-TEG.
Note that following a consistent trajectory $\{x(k)\}_{k\in\nato}$ guarantees that no token dies, since all time constraints associated with every place of the P-TEG are satisfied.

\section{Periodic trajectories and the NCP}\label{se:periodic}

In this section, we show that the problem of finding all $\lambda\in\R_{\geq 0}$ such that a P-TEG admits 1-periodic (and, more generally, $d$-periodic) trajectories of period $\lambda$ is equivalent to the PIC-NCP for a particular choice of matrices $\MP,\ \MI,\ \MC$.

Given a number $d\in\nat$, we say that a trajectory $\{x(k)\}_{k\in\nato}$ is $d$-periodic with period $\lambda\in\R_{\geq 0}$ if, in the max-plus algebra, $\forall k\in\nato$, $x(k+d) = \lambda^d x(k)$.
In standard algebra, $d$-periodic trajectories satisfy: $\forall k\in\nato$, $\forall i\in\{1,\ldots,n\}$, $x_i(k+d) =d\times \lambda + x_i(k)$.
We indicate by $\solPTEG$ the set of all $\lambda\in\R_{\geq 0}$ for which the P-TEG characterized by matrices $A^0,A^1,B^0,B^1$ admits a consistent $d$-periodic trajectory of period $\lambda$, for some $x(0),x(1),\ldots,x(d-1)\in\R^n$.
In the remainder of the section, we consider a P-TEG characterized by matrices $A^0$, $A^1$, $B^0$, $B^1$, and we define: $\MP\coloneqq B^{1\sharp}$, $\MI\coloneqq A^1\oplus E_\otimes$, $\MC\coloneqq A^0\oplus B^{0\sharp}$.
In the following theorem, we show the connection between $d$-periodic trajectories and a variant of the NCP.

\begin{theorem}\label{th:main}
The set $\solPTEG$ coincides with 
$\solNCPt$, where $\TP=\TP(\lambda)$, $\TI=\TI(\lambda)$, $\TC$ are $d\times d$ matrices defined as
\begin{gather*}\small
	\TP \coloneqq 
	\begin{bmatrix}
		-\infty & 0 & -\infty &\cdots & -\infty\\
		-\infty & -\infty & 0 & \cdots & -\infty\\
		\vdots & \vdots & \vdots & \ddots & \vdots\\ 
		-\infty & -\infty & -\infty & \cdots & 0 \\
		\lambda^d & -\infty & -\infty & \cdots & -\infty
	\end{bmatrix},~
	\TI \coloneqq 
	\begin{bmatrix}
		-\infty & -\infty & \cdots & -\infty & \lambda^{-d}\\
		0 & -\infty & \cdots & -\infty & -\infty\\
		-\infty & 0 & \cdots & -\infty & -\infty\\
		\vdots & \vdots & \ddots & \vdots & \vdots\\ 
		-\infty & -\infty & \cdots & 0 & -\infty
	\end{bmatrix}
\end{gather*}
(where expressions $\lambda^d$ and $\lambda^{-d}$ are meant in the max-plus sense and correspond, in the standard algebra, respectively to $d\times \lambda$ and $-d\times\lambda$) and $\TC \coloneqq E_\otimes$.
\end{theorem}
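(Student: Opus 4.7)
The plan is to reformulate the existence of a $d$-periodic admissible trajectory of period $\lambda$ as a single max-plus linear inequality $My\preceq y$ for $y\in\R^{dn}$ obtained by stacking the $d$ vectors $x(0),\ldots,x(d-1)$, and then to recognize $M$ as $(\TP\totimes\MP)\oplus(\TI\totimes\MI)\oplus(\TC\totimes\MC)$. Once this identification is in place, the equivalence with $\solNCPt$ follows immediately from \Cref{pr:v_leq_Bv}.

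First, I would rewrite the four basic constraints in~\eqref{eq:dynamics}, together with the monotonicity requirement $x(k)\preceq x(k+1)$, for each $k\in\{0,\ldots,d-1\}$. By \Cref{pr:inversion_linear_inequality}, the two $\odot$-inequalities $x(k)\preceq B^0\odot x(k)$ and $x(k+1)\preceq B^1\odot x(k)$ become $B^{0\sharp}\otimes x(k)\preceq x(k)$ and $\MP\otimes x(k+1)\preceq x(k)$. Bundling them with the lower-bound constraints and with monotonicity via \Cref{re:oplus_separation} collapses the system into three compact inequalities per index $k$: $\MC\otimes x(k)\preceq x(k)$, $\MI\otimes x(k)\preceq x(k+1)$, and $\MP\otimes x(k+1)\preceq x(k)$. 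The $d$-periodicity $x(k+d)=\lambda^d x(k)$ means the trajectory is determined by $x(0),\ldots,x(d-1)$; substituting $x(d)=\lambda^d x(0)$ into the $k=d-1$ instances yields the wrap-around inequalities $\lambda^{-d}\MI\otimes x(d-1)\preceq x(0)$ and $\lambda^d\MP\otimes x(0)\preceq x(d-1)$.

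Stacking $y\coloneqq(x(0)^\top,\ldots,x(d-1)^\top)^\top$, all $3d$ inequalities fit into a single block inequality $My\preceq y$ in which $M$ has $\MC$ on every diagonal block, $\MI$ on every subdiagonal block, $\MP$ on every superdiagonal block, and the wrap-around terms $\lambda^{-d}\MI$ and $\lambda^d\MP$ as the $(1,d)$- and $(d,1)$-blocks, respectively. Reading the nonzero entries off $\TP$, $\TI$ and $\TC=E_\otimes$ and applying the definition of $\totimes$ shows that this is precisely $(\TP\totimes\MP)\oplus(\TI\totimes\MI)\oplus(\TC\totimes\MC)$. Since every step above is an equivalence, \Cref{pr:v_leq_Bv} completes the argument. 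The main bookkeeping hurdle will be verifying this block-by-block identification: in particular, checking that the corners $\lambda^{\pm d}$ of $\TP$ and $\TI$ sit in exactly the right positions to carry the wrap-around terms, and that the monotonicity constraint is cleanly absorbed into $\MI$ through the $E_\otimes$ summand rather than appearing as a separate family of inequalities.
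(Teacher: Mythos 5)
Your proposal is correct and follows essentially the same route as the paper's proof: converting the $\odot$-inequalities via \Cref{pr:inversion_linear_inequality}, absorbing monotonicity into $\MI=A^1\oplus E_\otimes$, stacking the $d$ vectors into a single block inequality whose matrix is exactly $(\TP\totimes\MP)\oplus(\TI\totimes\MI)\oplus(\TC\totimes\MC)$, and concluding with \Cref{pr:v_leq_Bv}. The only point the paper makes more explicitly is that, by invertibility of $\lambda^d$, satisfaction of the stacked inequality for $k=0$ automatically propagates to all $k\in\nato$, which you cover implicitly by observing that the trajectory is determined by $x(0),\ldots,x(d-1)$.
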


\begin{proof}
Let us rewrite~\eqref{eq:dynamics} adding conditions $\forall k\in\nato$ $x(k+d)=\lambda^d x(k)$ ($d$-periodicity), $x(k+1)\succeq x(k)$ (non-decreasingness), 
using \Cref{pr:inversion_linear_inequality,pr:AxleqxleqBx}, and the fact that $\otimes$ is order preserving:
\begin{gather*}\small
\forall k\in\nato \qquad 
\left\{
\begin{array}{rl}
(A^0 \maxplus B^{0\sharp}) \maxtimes x(k) & \preceq x(k)\\
A^1\otimes x(k) &\preceq x(k+1)\\
B^{1\sharp} \otimes x(k+1) & \preceq x(k)\\
E_\otimes\otimes x(k) &\preceq x(k+1)\\
& \vdots \\
(A^0 \maxplus B^{0\sharp}) \maxtimes x(k+d-1) & \preceq x(k+d-1)\\
\lambda^{-d} A^1\otimes x(k+d-1) &\preceq x(k)\\
\lambda^d B^{1\sharp} \otimes x(k) & \preceq x(k+d-1)\\
\lambda^{-d}E_\otimes\otimes x(k+d-1) &\preceq x(k)\\
\end{array}
\right. 
\end{gather*}
Using \Cref{re:oplus_separation}, the distributive property of $\maxtimes$ over $\maxplus$ and the definition of tensor product, the system above can be written as
\[
\forall k\in\nato \quad 
((\TP\totimes\MP)\oplus(\TI\totimes\MI)\oplus(\TC\totimes\MC))\maxtimes \tilde{x}(k) \preceq \tilde{x}(k)~,
\]
where $\tilde{x}(k) \coloneqq [x^\intercal(k),x^\intercal(k+1),\ldots,x^\intercal(k+d-1)]^\intercal\in\R^{dn}$.
Note that, if $\tilde{x}(0)=\tilde{x}_0$ satisfies the above inequality, from the fact that $\lambda^d$ admits a multiplicative inverse $\lambda^{-d}$, then $\forall k\in\nato$, $\tilde{x}(k+1)$ automatically satisfies it.
Therefore, there exists an admissible $d$-periodic trajectory of period $\lambda$ if and only if inequality 
\begin{equation}\label{eq:aux}
((\TP\totimes\MP)\oplus(\TI\totimes\MI)\oplus(\TC\totimes\MC))\maxtimes \tilde{x}_0 \preceq \tilde{x}_0~,
\end{equation}
admits a solution $\tilde{x}_0\in\R^{dn}$.
Hence, from~\eqref{eq:NCP}, finding all $\lambda$s for which there exists a $d$-periodic trajectory of period $\lambda$ that is consistent for the P-TEG is equivalent to solving the NCP for the parametric precedence graph $\graphNCPt$.
\end{proof}

\begin{remark}\label{re:initial}
Given a $\lambda\in\solPTEG$, it is possible to characterize, in time $\pazocal{O}((dn)^3)$, the set of vectors $\tilde{x}_0=[x_0^\intercal,\ldots,x_{d-1}^\intercal]^\intercal$ such that the $d$-periodic trajectory $x(0)=x_0,\ldots, x(d-1)=x_{d-1}$, $\forall k\in\nato$, $x(k+d)=\lambda^dx(k)$ is consistent for the P-TEG.
Indeed, as discussed in the proof of \Cref{th:main}, this set coincides with the set of solutions $\tilde{x}_0\in\R^{dn}$ to~\eqref{eq:aux}, which, from \Cref{pr:v_leq_Bv}, is equivalent to
$
\{((\TP\totimes\MP)\oplus(\TI\totimes\MI)\oplus(\TC\totimes\MC))^*\otimes u \ | \ u\in\R^{dn} \}.
$
\end{remark}

Since $\TP[1]=\lambda$, $\TI[1]=\lambda^{-1}$, $\TC[1]=0$, an immediate consequence of \Cref{th:main} is that $\solPTEG[1]$ coincides with the solution set of the PIC-NCP on graph $\graphPIC$, i.e., $\solNCP{\lambda\MP\oplus\lambda^{-1}\MI\oplus\MC}$.
As the PIC-NCP can be solved in time complexity $\pazocal{O}(n^4)$~\cite{zorzenon2021nonpositive}, this proves that we can compute the set of $\lambda$s for which there exists a 1-periodic trajectory with period $\lambda$ that is consistent for a P-TEG in the same complexity.

On the other hand, since $\TP\totimes\MP$, $\TI\totimes\MI$, $\TC\totimes\MC\in\Rmax^{dn\times dn}$,
it is natural to think that the complexity for computing $\solPTEG$ would grow with $d$.
However, the following result gives a surprisingly simple characterization of the admissible periods of $d$-periodic trajectories.

\begin{theorem}\label{th:fromdto1}
For all $d\in\nat$, the set $\solPTEG$ coincides with $\solPTEG[1]$.
\end{theorem}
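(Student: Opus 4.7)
The plan is to prove the two set inclusions separately. The inclusion $\solPTEG[1] \subseteq \solPTEG$ is immediate: any 1-periodic trajectory $x(k) = \lambda^k y$ of period $\lambda$ also satisfies $x(k+d) = \lambda^d x(k)$, so it automatically qualifies as a $d$-periodic trajectory of period $\lambda$.

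For the substantive direction $\solPTEG \subseteq \solPTEG[1]$, I would fix $\lambda \in \solPTEG$ together with a consistent $d$-periodic trajectory $\{x(k)\}_{k\in\nato}$ of period $\lambda$, and define the candidate
\[
y \;:=\; \bigoplus_{k=0}^{d-1} \lambda^{-k}\, x(k) \;\in\; \R^n,
\]
i.e., $y_i = \max_{k\in\{0,\ldots,d-1\}} (x_i(k) - k\lambda)$. I would then show that $\hat x(k) := \lambda^k y$ is a consistent 1-periodic trajectory of period $\lambda$. Non-decreasingness of $\hat x$ is automatic from $\lambda \geq 0$, and by time-invariance the consistency check reduces to verifying the four inequalities $A^0 y \preceq y$, $y \preceq B^0 \mintimes y$, $A^1 y \preceq \lambda y$, $\lambda y \preceq B^1 \mintimes y$, which together witness $\lambda \in \solPTEG[1]$.

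The two $A$-inequalities fall out by distributing $\otimes$ over $\oplus$ and invoking the original constraints on $\{x(k)\}$: one obtains $A^0 y = \bigoplus_k \lambda^{-k} A^0 x(k) \preceq \bigoplus_k \lambda^{-k} x(k) = y$, and $A^1 y \preceq \bigoplus_{k=0}^{d-1} \lambda^{-k} x(k+1)$, where re-indexing by $m = k+1$ together with the periodicity identity $x(d) = \lambda^d x(0)$ collapses the right-hand sum back into $\lambda y$.

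The two $B$-inequalities form the delicate step, because the dual product $\mintimes$ does not distribute over $\oplus$ in a way that would allow us to pull out the $\bigoplus$ defining $y$. I would therefore argue entry-wise: for each pair $(i,j)$, the original constraint $x_i(k) \leq B^0_{ij} + x_j(k)$ gives $-k\lambda + x_i(k) \leq B^0_{ij} + (-k\lambda + x_j(k)) \leq B^0_{ij} + y_j$, and taking the max over $k$ produces the scalar inequality $y_i \leq B^0_{ij} + y_j$. The $B^1$-case proceeds analogously but requires a careful index shift that absorbs the periodicity $x(d) = \lambda^d x(0)$, so that every term appearing in $\lambda + y_i$ is dominated by a corresponding term appearing in $y_j$; this bookkeeping is the main obstacle, but once the shift is aligned correctly the inequality $\lambda + y_i \leq B^1_{ij} + y_j$ follows term by term, concluding the argument.
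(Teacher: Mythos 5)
Your proof is correct, but it takes a genuinely different route from the paper's. The paper stays entirely inside the NCP reformulation of \Cref{th:main}: it argues by contraposition that any positive-weight circuit of length $h$ through node $i$ in $\graphPIC$ lifts to a positive-weight circuit of length $h\times d$ through node $i$ in the block graph $\graphNCPt$, the key tools being the mixed product property (\Cref{pr:mixed}) together with the identities $(\TP)^d=\lambda^d E_\otimes$, $(\TI)^d=\lambda^{-d}E_\otimes$, $\TP\otimes\TI=E_\otimes$. You instead work directly at the trajectory level and give a constructive argument: from a consistent $d$-periodic trajectory you build the explicit $1$-periodic witness generated by $y=\bigoplus_{k=0}^{d-1}\lambda^{-k}x(k)$. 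Your verification is sound: the $A^0$, $A^1$ inequalities follow from distributivity of $\otimes$ over $\oplus$ plus the reindexing $\lambda^{-d}x(d)=x(0)$, and the $B^0$, $B^1$ inequalities --- which indeed cannot be pushed through the $\bigoplus$ defining $y$, since $\odot$ distributes only over $\wedge$ --- do go through entry-wise exactly as you sketch: for $B^1$ the $k\geq 1$ terms satisfy $x_i(k)-(k-1)\lambda\leq B^1_{ij}+x_j(k-1)-(k-1)\lambda\leq B^1_{ij}+y_j$, and the $k=0$ term satisfies $x_i(0)+\lambda=x_i(d)-(d-1)\lambda\leq B^1_{ij}+x_j(d-1)-(d-1)\lambda\leq B^1_{ij}+y_j$, so the index shift you flag as the main obstacle does close. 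What each approach buys: yours is more elementary (it bypasses \Cref{th:main} and the tensor-product machinery entirely) and yields the additional information that a concrete $1$-periodic trajectory can be read off any $d$-periodic one in closed form; the paper's proof keeps the whole development inside the PIC-NCP framework that is the subject of the paper and directly relates the circuit structures of the two parametric precedence graphs, which is the form in which the result is exploited algorithmically.
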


\begin{proof}
Since every 1-periodic trajectory is also $d$-periodic, it is obvious that $\solPTEG\supseteq \solPTEG[1]$.
Therefore, we only need to prove that $\solPTEG\subseteq\solPTEG[1]$; from \Cref{th:main}, this is equivalent to prove that $\solNCPt
\subseteq\solNCP{\lambda\MP\oplus\lambda^{-1}\MI\oplus\MC}$.
In order to do so, in the following we suppose that, for a given $\lambda\in\R_{\geq0}$, there exists a circuit in $\graphPIC$ from node $i\in\{1,\ldots,n\}$ of length $h$ and positive weight; we will prove that, as a consequence, there exists a circuit in $\graphNCPt$
from node $i$ of length $h\times d$ and positive weight.
Algebraically, this is the same as proving that, if $((\lambda \MP \oplus \lambda^{-1} \MI \oplus \MC)^h)_{ii}\succ0$, then 
$((((\TP\totimes\MP)\oplus(\TI\totimes\MI)\oplus(\TC\totimes\MC))^{h})^d)_{ii}\succ0$.

By expanding the $h$-th power of the trinomial $\lambda\MP\oplus \lambda^{-1}\MI\oplus \MC$ we observe that, since $\oplus$ is idempotent, there must exist a certain matrix $M$ of the form $M = \lambda^{h_\MP-h_\MI} X_1 \otimes \cdots \otimes X_h$, with $X_1,\ldots,X_h\in\{\MP,\MI,\MC\}$, obtained by multiplying, in some order, $h_\MP$ times matrix $\lambda \MP$, $h_\MI$ times matrix $\lambda^{-1} \MI$ and $h_\MC$ times matrix $\MC$, with $h_\MP+h_\MI+h_\MC=h$, such that $((\lambda \MP \oplus \lambda^{-1} \MI \oplus \MC)^h)_{ii} = M_{ii} \succ0$.

By expanding the $h$-th power of the trinomial $(\TP\totimes \MP) \oplus (\TI\totimes \MI) \oplus (\TC \totimes \MC)$ we get, by idempotency,
\begin{gather*}
	((((\TP\totimes\MP) \oplus (\TI\totimes\MI)\oplus (\TC\totimes\MC))^{h})^d)_{ii} \succeq\\
	\succeq (((\T{1}\totimes X_1)\otimes (\T{2}\totimes X_2)\otimes \cdots \otimes (\T{h}\totimes X_h))^d)_{ii}~.
\end{gather*}
Using \Cref{pr:mixed} $(h\times d)-1$ times (after writing the $d$-th power as the product of $d$ terms), the latter expression can be written as
\begin{gather}\label{eq:aux1}
	((\T{1}\otimes \cdots \otimes \T{h})^d\totimes (X_1\otimes \cdots \otimes X_h)^d)_{ii}~. 
\end{gather}
It is easy to verify that the following properties hold on matrices $\TP$ and $\TI$: $(\TP)^d = \lambda^d E_\otimes$, $(\TI)^d=\lambda^{-d} E_\otimes$, and $\TP\otimes \TI = \TI \otimes \TP = \TC = E_\otimes$.
Therefore,
\begin{align*}
	(\T{1}\otimes \cdots \otimes \T{h})^d
	&=\begin{dcases}
		((\TP)^{h_\MP-h_\MI})^d & \mbox{if } h_\MP\geq h_\MI\\
		((\TI)^{h_\MI-h_\MP})^d & \mbox{otherwise} 
	\end{dcases}\\
	&=(\lambda^{h_\MP-h_\MI})^d E_\otimes~,
\end{align*}
and~\eqref{eq:aux1} simplifies to
\begin{gather*}
	((\lambda^{h_\MP-h_\MI} )^d E_\otimes\totimes (X_1\otimes \cdots \otimes X_h)^d)_{ii} =\\
	= ((\lambda^{h_\MP-h_\MI}X_1\otimes\cdots\otimes X_h)^d)_{ii}= (M^d)_{ii} \succeq M_{ii}\succ0.
\qedhere\end{gather*}
\end{proof}

\subsection{Example}\label{su:example}

In the following, we illustrate the previous theorems by a simple example taken from~\cite{zorzenon2020bounded}.
Let us consider the P-TEG shown in Fig.~\ref{fig:TEG_example}, characterized by matrices 
\begin{equation*}
\if 1\IEEE
\small
\fi
A^0 =     
    \begin{bmatrix}
        -\infty & -\infty & -\infty \\
        2 & -\infty & -\infty \\
        6 & 0.5 & -\infty
    \end{bmatrix},\ 
A^1 = \begin{bmatrix}
	-\infty & 0 & -\infty \\
	-\infty & -\infty & 0.5 \\
	-\infty & -\infty & 0
\end{bmatrix},
\end{equation*}
\begin{equation*}
\if 1\IEEE
\small
\fi
B^0 = \begin{bmatrix}
	+\infty & +\infty & +\infty \\
	3 & +\infty & +\infty \\
	+\infty & +\infty & +\infty
\end{bmatrix},\
B^1 = \begin{bmatrix}
        +\infty & +\infty & +\infty \\
        +\infty & +\infty & +\infty \\
        +\infty & +\infty & 4
    \end{bmatrix}.
\end{equation*}
\begin{figure}[t]
    \centering
    \resizebox{
\if 1\IEEE
    	.7\linewidth
\else
    	.6\linewidth
\fi
}{!}{
    \begin{tikzpicture}[node distance=.1cm and 1.5cm,>=stealth',bend angle=45,thick]

\node [transitionV,label=below:$t_1$] (x1) {};
\node [place,tokens=0,label=below:{$[2,3]$}] (p21) [right= of x1] {};
\node [place,tokens=1,label=above:{$[0,+\infty]$}] (p12) [above= of p21] {};
\node [transitionV,label=below:$t_2$] (x2) [right=of p21] {};
\node (p22) [below= of x2] {};
\node [place,tokens=0,label=below:{$[0.5,+\infty]$}] (p32) [right= of x2] {};
\node [place,tokens=1,label=above:{$[0.5,+\infty]$}] (p23) [above= of p32] {};
\node [transitionV,label=below:$t_3$] (x3) [right=of p32] {};
\node [place,tokens=1,label=below:{$[0,4]$}] (p33) [right= of x3] {};
\node [place,tokens=0,label=below:{$[6,+\infty]$}] (p31) [below= of p22] {};

\draw (x1) edge[->] (p21);
\draw (p21) edge[->] (x2);
\draw (x2) edge[->] (p32);
\draw (p32) edge[->] (x3);
\draw (x1) edge[bend right=25,->] (p31);
\draw (p31) edge[bend right=25,->] (x3);
\draw (x2) edge[bend right=20,->] (p12);
\draw (p12) edge[bend right=20,->] (x1);
\draw (x3) edge[bend right=20,->] (p23);
\draw (p23) edge[bend right=20,->] (x2);
\draw (x3) edge[bend right=20,->] (p33);
\draw (p33) edge[bend right=20,->] (x3);
\end{tikzpicture}
    }
    \caption{Example of P-TEG.}
    \label{fig:TEG_example}
\end{figure}
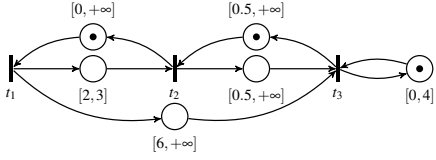
Let $\MP=B^{1\sharp}$, $\MI=A^1\oplus E_\otimes$, $\MC=A^0\oplus B^{0\sharp}$.
From \Cref{th:main,th:fromdto1}, for all $d\in\nat$, given a number $\lambda\in\R$ there exists a consistent $d$-periodic trajectory of period $\lambda$ if and only if the precedence graph $\graphPIC$ (shown in Fig.~\ref{fig:TEG_example_prec_graph}) does not contain circuits with positive weight.
Using the algorithm presented in~\cite{zorzenon2021nonpositive}, we obtain:
$
\solPTEG=\solNCP{\lambda\MP\oplus\lambda^{-1}\MI\oplus\MC} = \{\lambda \ | \ 3.5\leq \lambda \leq 4\}.
$
For instance, let us consider $\lambda=4$ and $d=2$; from \Cref{re:initial}, the 2-periodic trajectory $x(0)=x_0=[0,2.5,6]^\intercal$, $x(1)=x_1=[3.5,6.5,10]^\intercal$, $\forall k\in\nato$, $x(k+2)=4^2x(k)=8x(k)$ is consistent for the P-TEG, since, for $u = [0,\ 0,\ 0,\ 0,\ 0,\ 0]^\intercal$, we have
\begin{gather*}\small
	\TP[2] = 
\begin{bmatrix}
	-\infty & 0 \\
	8 & -\infty
\end{bmatrix},~
	\TI[2] = 
\begin{bmatrix}
	-\infty & -8\\
	0 & -\infty
\end{bmatrix},~
	\TC[2] = 
\begin{bmatrix}
	0 & -\infty\\
	-\infty & 0
\end{bmatrix},
\end{gather*}
\begin{gather*}\small
\left(
\left(
\TP[2]
\totimes \MP\right)
\oplus
\left(
\TI[2]
\totimes \MI\right)
\oplus
\left(
\TC[2]
\totimes \MC\right)
\right)^*\otimes u =
\end{gather*}
\begin{gather*}
\if 1\IEEE
\small
\fi
=\begin{bmatrix}
         0   &-3   &-6.5   &-4.5   &-7.5  &-10.5\\
    2.5         &0   &-3.5   &-1.5   &-4.5   &-7.5\\
    6    &3         &0    &2   &-1   &-4\\
    3.5    &0.5   &-2.5         &0   &-3   &-6.5\\
    6.5    &3.5    &0.5    &2.5         &0   &-3.5\\
   10    &7    &4    &6    &3         &0
\end{bmatrix}
\otimes 
\begin{bmatrix}
	0\\0\\0\\0\\0\\0
\end{bmatrix}=
\begin{bmatrix}
	x_0\\
	x_1\\
\end{bmatrix}.
\end{gather*}

\section{Conclusions}\label{se:conclusions}

The present paper shows the connection between two problems: the PIC-NCP, and the problem of finding, given a natural number $d$, the set of all real numbers $\lambda$ such that there exists a $d$-periodic trajectory of period $\lambda$ that is consistent for a P-TEG.
Indeed, based on their description in the max-plus algebra, we formulate the second problem as a particular instance of the first one.
Moreover, we prove that the solution set of the second problem is invariant with respect to $d$.
In a related paper~\cite{zorzenon2021nonpositive}, we have suggested an algorithm that solves the PIC-NCP in time complexity $\pazocal{O}(n^4)$, where $n$ is the number of nodes in the graph; this proves that the existence of consistent $d$-periodic trajectories can be checked in strongly polynomial time $\pazocal{O}(n^4)$, where $n$ is the number of transitions of a P-TEG with initially $0$ or $1$ token per place.
Since every P-TEG can be transformed into one with at most $1$ initial token per place, a consequence is that the same property can be checked in generic P-TEGs in time complexity $\pazocal{O}(\bar{n}^4)$, where $\bar{n}$ is defined in~\eqref{eq:PTEGtransformation}.
In future work, we aim at assessing, through numerical simulations, the advantages of the algorithm proposed in~\cite{zorzenon2021nonpositive} in terms of execution time, and at extending its applicability to safe P-time Petri nets.

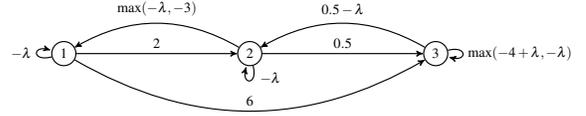
\begin{figure}[t]
	\centering
	\resizebox{
\if 1\IEEE
.9\linewidth
\else
.8\linewidth
\fi
}{!}{
	\begin{tikzpicture}[node distance=1cm and 4cm,>=stealth',bend angle=45,thick]

\node [place,tokens=0,label=center:{$1$},] (P1) {};
\node [place,tokens=0,label=center:{$2$},right= of P1] (P2) {};
\node [place,tokens=0,label=center:{$3$},right= of P2] (P3) {};

\draw (P1) edge[->] node[auto,swap,label=above:{$2$}] {} (P2);
 \draw (P1) edge[loop left,->] node {$-\lambda$} (P1);
 \draw (P2) edge[loop below,->] node [label={[yshift=.2cm]right:{$-\lambda$}}] {} (P2);
\draw (P3) edge[loop right,->] node {$\max(-4+\lambda,-\lambda)$} (P3);
\draw (P2) edge[bend right=30,->] node[auto,swap,label={[yshift=-.2cm]above:{$\max(-\lambda,-3)$}}] {} (P1);
\draw (P2) edge[->] node[auto,swap,label=above:{$0.5$}] {} (P3);
\draw (P1) edge[bend right=30,->] node[auto,swap,label=above:{$6$}] {} (P3);
\draw (P3) edge[bend right=30,->] node[auto,swap,label={[yshift=-.2cm]above:{$0.5-\lambda$}}] {} (P2);

\end{tikzpicture}
	}
	\caption{Parametric precedence graph associated with the P-TEG of Fig.~\ref{fig:TEG_example}.}
	\label{fig:TEG_example_prec_graph}
\end{figure}

\if 1\IEEE
\bibliographystyle{plain}
\else
\bibliographystyle{plainnat}
\fi
\bibliography{references}
\end{document}